\pgfplotsset{compat=1.18}
\title{Generating Cofaces in Vietoris--Rips Filtration Order}
\author{Ulrich Bauer}{Technical University of M\"unchen}{bauer@tum.de}{}{}
\author{Jordan Matuszewski}{City University of New York -- the Graduate Center}{jmatuszewski@gradcenter.cuny.edu}{}{This work was supported by a grant from the Simons Foundation (961833, MVJ).}
\author{Mikael Vejdemo-Johansson}{City University of New York -- College of Staten Island \and College of Staten Island}{mvj@math.csi.cuny.edu}{0000-0001-6322-7542}{This work was supported by a grant from the Simons Foundation (961833, MVJ).}
\authorrunning{U.~Bauer \and J.~Matuszewski \and M.~Vejdemo-Johansson}
\date{September 2024}
\keywords{Vietoris-Rips complex, cofaces, cofacets}
\begin{document}

\maketitle

\begin{abstract}
We propose novel algorithms for generating cofaces in a Vietoris-Rips complex.
Cofaces -- simplices that contain a given simplex -- have multiple important uses in generating and using a Vietoris-Rips filtration: both in creating the coboundary matrix for computing persistent cohomology, and for generating the ordered sequence of simplices in the first place.
Traditionally, most methods have generated simplices first, and then sorted them in filtration order after the generation step.
In this paper, we propose fast algorithms  for generating the sequence of simplices by generating cofaces of a given simplex with the same diameter, which by construction produces simplices in filtration order, and for generating additional cofaces in filtration order using sorted neighborhood lists in order to generate coboundaries directly in filtration order.
\end{abstract}

\section{Introduction}

In Topological Data Analysis, and especially in persistent (co)homology, everything starts with a filtered simplicial complex.
Out of many options, by far the most common is the Vietoris-Rips complex: the simplicial complex that to a filtration parameter $\alpha$ assigns the flag complex of the $\alpha$-neighborhood graph.
Construction of this complex traditionally follows the Incremental construction by Afra Zomorodian in \cite{zomorodian_fast_2010}.

As demonstrated in \cite{Chen2011Persistent,bauer_clear_2014,bauer_ripser_2021}, considerable improvements can achieved by moving up through the dimensions, instead of going through parameter values the way that Zomorodian's Incremental algorithm does, especially since it allows for the \emph{clearing} operation \cite{Chen2011Persistent}.
Thus, in the popular Ripser package \cite{bauer_ripser_2021}, after special handling of the first two dimensions (vertices and edges), the software relies on generating all $k$-simplices in a Vietoris-Rips complex by going through $(k-1)$-simplices and generating cofacets (that is, cofaces of codimension $1$) by adding vertices.
In the design of Ripser, this cofacet generation fulfills two separate tasks: on the one hand, it is used to create all simplices in the next pass over the complex to compute the barcode for the next homological degree; on the other hand, it is used to generate all cofacets of a particular simplex when creating a new column of the coboundary matrix.

In this paper, we will introduce new algorithms for generating cofacets of a given simplex.
One of the algorithms, largely discussed in \cref{sec:simplex-stream}, is focused on producing a sequence of simplices in Vietoris-Rips filtration order (more precisely, a lexicographic refinement to a simplexwise filtration) without any duplicates.
The other algorithm, discussed in \cref{sec:coboundary-generation}, is focused on computing the coboundary of a specific simplex, with all cofacets listed in filtration order.

\section{Method}
\label{sec:method}

Most constructions in our paper stem from one particular notion, present in Ripser as \emph{top cofacets}:

\begin{remark}
\label{rem:facet-vertex-pair}
A cofacet in a simplicial complex is completely described by one of its facets, paired with a vertex.
If we create a notion of a particular facet-vertex pair being \emph{responsible} for generating a specific new cofacet, with a bijection between the $k$-simplices and a subset of these pairs, we can uniquely enumerate all cofacets by finding all such pairs.
\end{remark}

In Ripser, this enumeration approach is present as the \emph{top cofacet enumeration}, where the \emph{lexicographically first} facet is responsible for generating a cofacet: for a simplex $[v_d,\dots,v_0]$, the first facet would be $[v_{d-1},\dots,v_0]$ and thus the pair $(v_d,[v_{d-1},\dots,v_0])$ would be responsible for generating the cofacet $[v_d,\dots,v_0]$.
With this definition, it is easy to recognize valid pairs: these are the ones where the vertex paired with a simplex comes before all vertices of the simplex in the vertex order.
And it is also easy to recognize what the possible cofacets of a particular simplex can be: since $v_d>v_{d-1}$, we can just iterate through all choices of additional vertex $v > v_{d-1}$.
This produces the cofacets in reverse colexicographic ordering, which works well enough for the purposes in Ripser.

Our work builds on the following observation:
\begin{remark}\label{rem:longest-edge}
Any simplex with Vietoris-Rips filtration value $\alpha$ has at least one edge of length exactly $\alpha$, and no longer edges.
\end{remark}

Ripser's approach generates all simplices in a complex in reverse colexicographic order, which does not have to be the same order as the one induced by the Vietoris-Rips filtration values.
The problem can easily be seen: an additional vertex $v$ might be at a distance greater than $\alpha$ from one of the vertices of a simplex, which would mean that the filtration value increases as we go from the simplex $[v_{d-1},\dots,v_0]$ to its cofacet.

Our ambition in this paper is to generate the Vietoris-Rips complex by cofacets, but already sorted by filtration values.
If we have a list of all the $(d-1)$-dimensional simplices ordered by filtration value, and we can find cofacets \emph{without increasing filtration values}, we will get a sorted list; if we also make sure that no $d$-simplex gets generated more than once, we can generate the simplex stream we seek.
If we never increase the filtration values, we can start with an edge of length $\alpha$, and gradually generate all cofaces of diameter $\alpha$ of this edge by generating cofacets and cofacets of cofacets.
But doing this, we no longer have access to the simple rule described above: it could well be that there are vertices $v > v_{d-1}$ that are not suitable (because they would increase the filtration value), and it could be that a new vertex would have to be $v < v_{d-1}$ to construct the bijection we are looking for.
We will discuss our approach in much more detail in \cref{sec:simplex-stream} -- the solution we arrive at is to make the \emph{lexicographically first full diameter} facet responsible for generating a cofacet.

\subsection{Notation and setup}
\label{sec:notation-setup}

We assume we have some finite metric space $(V,d)$, and want the Vietoris-Rips simplex stream and coboundary operator for of this set of points.
We write explicit simplices with square brackets containing vertices in decreasing order.

We will use \emph{coface} for any simplex $\tau$ containing a given simplex $\sigma$ and \emph{cofacet} when $|\tau| = |\sigma|+1$. When convenient, we use $(\sigma + v)$ in lieu of $\tau$. In the same fashion,  $(\tau - v)$ may be used in place of $\sigma$ at times.

Similarly, a \emph{face} $\sigma$ is contained in a simplex $\tau$ and it is a \emph{facet} when they differ by exactly one vertex.

\subsection{Simplex Stream generation}
\label{sec:simplex-stream}


Maintaining filtration order of our Vietoris-Rips simplex stream requires that at every parameter value $\alpha$ when a $(d-1)$-simplex $\sigma = [v_d,\dots,v_1]$ appears we can identify the diameter-$\alpha$ $d$-simplices $\tau$ for which $\sigma$ is the lexicographically first facet. 

We get there by classifying the possible \emph{lexicographically first full-diameter facets} of a simplex.

\begin{theorem}\label{thm:three-cases}
    Given a simplex $\tau = [v_d,\dots,v_0]$ of diameter $\alpha$, its lexicographically first full-diameter facet is one of:
    \begin{enumerate}
        \item $[v_{d-1},\dots,v_0]$ if at least one of the edges in the facet has length $\alpha$;
        \item $[v_d,v_{d-2},\dots,v_0]$ if every full length edge contains $v_d$, and at least one of the edges in this facet has length $\alpha$;
        \item $[v_d,v_{d-1},v_{d-3},\dots,v_0]$ if $v_dv_{d-1}$ is the only full length edge in $\tau$.
    \end{enumerate}
\end{theorem}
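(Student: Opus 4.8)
The plan is to reduce the statement to two elementary facts: first, the complete lexicographic ordering of the $d+1$ facets of $\tau$; and second, a characterization of which facets retain diameter $\alpha$ (the \emph{full-diameter} facets). Combining them, the lexicographically first full-diameter facet is simply the first full-diameter facet met while walking down the lex order of facets, and the three cases of the theorem are exactly the first three steps of that walk.

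First I would pin down the lex order of facets. Writing $\tau\setminus v_j$ for the facet obtained by deleting $v_j$, I claim $\tau\setminus v_d <_{\mathrm{lex}} \tau\setminus v_{d-1} <_{\mathrm{lex}} \cdots <_{\mathrm{lex}} \tau\setminus v_0$. This is a one-line comparison: for $i<j$ the sequences $\tau\setminus v_j$ and $\tau\setminus v_i$ agree on the prefix $v_d,\dots,v_{j+1}$, and at the next position $\tau\setminus v_j$ carries $v_{j-1}$ while $\tau\setminus v_i$ still carries $v_j$; since vertices are listed in decreasing order and $v_{j-1}<v_j$, the facet $\tau\setminus v_j$ is lexicographically smaller. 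In particular the three smallest facets, in order, are $\tau\setminus v_d=[v_{d-1},\dots,v_0]$, then $\tau\setminus v_{d-1}=[v_d,v_{d-2},\dots,v_0]$, then $\tau\setminus v_{d-2}=[v_d,v_{d-1},v_{d-3},\dots,v_0]$ --- precisely the three candidate facets in the statement. (Here $d\ge 2$, so that a facet can carry an edge at all.)

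Next I would record when a facet is full-diameter. By \cref{rem:longest-edge}, $\tau$ has an edge of length $\alpha$ and none longer, so for a subsimplex of $\tau$ the conditions ``has diameter $\alpha$'' and ``contains an edge of length $\alpha$'' coincide; thus $\tau\setminus v_j$ is full-diameter iff some length-$\alpha$ edge of $\tau$ avoids $v_j$. Now I run the three-step walk. (1) If $[v_{d-1},\dots,v_0]$ contains a length-$\alpha$ edge it is full-diameter, and being lex-smallest overall it is the answer. (2) Otherwise every length-$\alpha$ edge contains $v_d$, and then $[v_d,v_{d-2},\dots,v_0]$ is full-diameter exactly when some length-$\alpha$ edge of $\tau$ also avoids $v_{d-1}$, i.e.\ when some edge of this facet has length $\alpha$; being the lex-second facet, it is then the answer. (3) If both fail, every length-$\alpha$ edge contains both $v_d$ and $v_{d-1}$, so $v_dv_{d-1}$ is the unique length-$\alpha$ edge, and $[v_d,v_{d-1},v_{d-3},\dots,v_0]$ contains it, hence is full-diameter; since the two lex-smaller facets failed, it is the lex-first full-diameter facet. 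Finally I would note that the three hypotheses are mutually exclusive and, using \cref{rem:longest-edge} again to guarantee at least one length-$\alpha$ edge, they exhaust all cases, so the ``if'' clauses may be read as ``if and only if''.

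I do not anticipate a serious obstacle: the only delicate point is getting the lexicographic comparison of facets exactly right --- which position first differs, and in which direction --- together with handling the small-$d$ boundary case where the tails ``$v_{d-3},\dots,v_0$'' degenerate (for $d=2$ this facet is just $[v_d,v_{d-1}]$); everything else is a direct case check against the edge-avoidance criterion.
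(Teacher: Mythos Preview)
Your proposal is correct and follows essentially the same approach as the paper: both arguments walk down the facets in lexicographic order $\tau\setminus v_d,\tau\setminus v_{d-1},\tau\setminus v_{d-2}$, using at each step that the current facet fails to be full-diameter precisely when all length-$\alpha$ edges meet the deleted vertex. Your version is slightly more systematic in stating the full lex order of all facets and the edge-avoidance criterion up front, whereas the paper establishes only the pairwise comparisons it needs at each step, but the logical content is the same.
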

\begin{proof}
    Let $\sigma_i=[v_d,\dots,\hat{v}_i,\dots,v_0]$ be the facet of $\tau$ resulting from removing $v_i$.
    Note that in the lexicographic order, $\sigma_i > \sigma_d$, since $v_d > v_{d-1}$ (recall that simplices are written in decreasing vertex order, see \cref{sec:notation-setup}).
    The only way for $\sigma_d$ to not be the lexicographically first full-diameter facet is if $\sigma_d$ is not full-diameter, so that all $d(v_i,v_j)<\alpha$ for $i,j < d$.

    Now, suppose that $\sigma_d$ is not the lexicographically first full-diameter facet.
    Then none of the edges $v_iv_j$ where $i,j < d$ have length $\alpha$.
    Hence, since $\tau$ \emph{has} diameter $\alpha$, at least one of the remaining edges has to have length $\alpha$.
    These remaining edges are all on the shape $v_dv_i$, for $i<d$.

    Among the remaining facets (disqualifying $\sigma_d$ by our assumption), the lexicographically first one is $\sigma_{d-1}$, since $v_{d-1} > v_{d-2}$ (and we would be comparing $[v_d,v_{d-1},\dots,\hat{v}_i,\dots,v_0]$ to $[v_d,v_{d-2},\dots,v_0]$).
    If $\sigma_{d-1}$ has full diameter, it is therefore the lexicographically first facet of full diameter.
    This implies that at least one of the edges $v_dv_i$ with $i<(d-1)$ is full length.

    The only way for $\sigma_{d-1}$ to not be the lexicographically first full-diameter face is if $v_dv_{d-1}$ is the only full length edge in $\tau$.

    Finally, assume that neither $\sigma_d$ nor $\sigma_{d-1}$ is the lexicographically first full-diameter facet.
    Then $v_dv_{d-1}$ is the only edge in $\tau$ with full length.
    Among remaining facets, the lexicographically first facet is $\sigma_{d-2}$, because $v_{d-2} > v_{d-3}$.
    Since $d(v_d,v_{d-1})=\alpha$, the simplex $\sigma_{d-2}=[v_d,v_{d-1},v_{d-3},\dots,v_0]$ has diameter $\alpha$.
\end{proof}

This theorem gives us three different cases to consider when trying to generate cofacets of a given simplex $\sigma=[v_d,\dots,v_1]$ of diameter $\alpha$.

To understand the three cases, consider a $d$-simplex $\tau = [t_d,\dots,t_0]$. Following Bauer's approach closely, we want $\tau$ to be represented by the lexicographically first facet \emph{of full diameter}. 

If $[t_{d-1},\dots,t_0]$ has diameter $\alpha$, we do not need to do anything further - and any $t > t_{d-1}$ with sufficiently short edges will be an appropriate vertex to adjoin. This produces case 1 above.

If, however, in $\tau$ all the full length edges connect to $t_d$, then $[t_{d-1},\dots,t_0]$ has a lower diameter than $\alpha$. To keep the filtration order, this means that we cannot represent such a $\tau$ as a cofacet of $[t_{d-1},\dots,t_0]$. Then the lexicographically first facet of full diameter would be $[t_d,t_{d-2}\dots,t_0]$, where $t_{d-1}$ gets removed. Looking towards the generation problem, this observation translates to case 2 below -- if the new vertex is $v_{d-1} > v > v_{d-2}$, and all full length edges are of the form $[v_{d-1},v_i]$ or $[v_{d-1},v]$ (and all other new edges are shorter than $\alpha$), then we get an appropriate new vertex to adjoin.

This second construction, however, fails if in $\tau$ there is exactly one full length edge, namely $[t_d,t_{d-1}]$. If this is the case, dropping either $t_{d}$ or $t_{d-1}$ gives us a facet with lower diameter than $\alpha$. So to get the lexicographically first facet with full diameter, we would then need to drop $t_{d-2}$. This produces case 3, where new vertices to consider are precisely $v$ such that $v_{d-2}>v>v_{d-3}$ and all new distances are strictly shorter than $\alpha$.

For an illustrative example of how this works in a simplex stream generation, going from $(d-1)$-simplices to their $d$-dimensional cofacets, consider the following figure:

\begin{tikzpicture}
\filldraw[black] (0,0) circle (2pt) node[anchor=north east] {$v_2$};
\filldraw[black] (1,0) circle (2pt) node[anchor=north west] {$v_1$};
\filldraw[black] (0,1) circle (2pt) node[anchor=south east] {$v$};
\filldraw[black] (3,1) circle (2pt) node[anchor=south west] {$v_3$};
\draw[black,very thick] (0,0) -- (1,0);
\draw[black] (0,0) -- (0,1);
\draw[black] (0,1) -- (1,0);
\draw[black,very thick] (0,0) -- (3,1);
\draw[black] (0,1) -- (3,1);
\draw[black,very thick] (1,0) -- (3,1);
\end{tikzpicture}

If we are looking for possible vertices $v$ to add, we would want to iterate through:

\paragraph*{Case 1: $v > v_d$}

Similar to Bauer's approach for responsible-pair generation we consider as candidates all vertices $v > v_d$.
Such a vertex produces a cofacet for which $\sigma$ is the lexicographically first \emph{full diameter} facet of $\sigma + v$, as long as $\sigma+v$ does not have a larger diameter than $\alpha$.
Establishing this corresponds to checking that all distances $d(v,v_i)\leq\alpha$.

\paragraph*{Case 2: $v_d > v > v_{d-1}$ }

This case we only need to evaluate if every full length edge in $\sigma$ connects to $v_d$.
In this case, any vertex $v$ between $v_d$ and $v_{d-1}$ could be a candidate producing a cofacet.
We need to check that all full length edges in $\sigma$ connect to $v_d$, and also that $[v,v_d]$ is the only new edge that could have length $\alpha$: all other $d(v,v_j) < \alpha$
These conditions ensure that for the produced cofacet $\tau$, all full-length edges connect to the first vertex.

\paragraph*{Case 3: $v_{d-1} > v > v_{d-2}$:}

This last case corresponds to the 3rd case in \cref{thm:three-cases}.
This case only needs to be evaluated if $v_dv_{d-1}$ is the only full length edge in $\sigma$.
For such $\sigma$, any vertex $v$ between $v_{d-1}$ and $v_{d-2}$ could be a candidate producing a cofacet.
We need to check for any candidate vertex that all edges from $v$ to any vertex of $\sigma$ is strictly shorter than $\alpha$.

A different perspective may serve to develop the intuition: working in reverse from $\tau$, the removal of $v_d$ and $v_{d-1}$ results in facets ($\tau - v_d$) and ($\tau - v_{d-1}$) respectively, neither of which are full diameter. As such, we select the next least vertex whose removal results in the full diameter facet $\tau - v$. 


In \cref{app:worked-example}, we include a detailed worked example that illustrates how these candidate vertices emerge and can be checked.

\subsection{Apparent Pairs}
\label{sec:apparent-pairs}

One of the fundamental breakthroughs in the Ripser approach was the early recognition of \emph{apparent pairs} \cite{bauer2021gromov}, that generate a 0-length cohomology interval without much influence on other parts of the computation, and therefore can be dropped from the computation without further processing.

We recall the definition of an apparent pair:

\begin{definition}
    A pair of simplices $\sigma \subset \tau$ is an \emph{apparent pair} if $\sigma$ is the lexicographically first same-diameter facet of $\tau$ and $\tau$ is the lexicographically last same-diameter cofacet of $\sigma$.
\end{definition}

Bauer and Edelsbrunner \cite{bauer2017morse} provide a discrete morse gradient field built from apparent pairs showing that they can be removed without influencing the resulting (co)homology computation.

The three cofacet generation cases we identify in Section~\ref{sec:method} contribute to an approach for fast identification of an apparent pairing while generating a simplex stream:

If we find ourselves in case 1., with $\tau = [v,v_{d-1},\dots,v_0]$ with diameter $\alpha$, then clearly $\sigma=[v_{d-1},\dots,v_0]$ is the lexicographically first full diameter facet. $\tau$ is in an apparent pair exactly when $v$ is the last vertex within an $\alpha$-neighborhood of $\sigma$.

In case 2., with $\tau = [v_{d-1},v,v_{d-2},\dots,v_0]$ with diameter $\alpha$ and all full-length edges connecting to $v_{d-1}$, and at least one of these full-length edges connecting to some $v_j, j\neq d-1$, this is in an apparent pair with $\sigma=[v_{d-1},v_{d-2},\dots,v_0]$ precisely if there are no vertices later than $v$ in an $\alpha$-neighborhood of $\sigma$.

In case 3., with $\tau = [v_{d-1},v_{d-2},v,v_{d-3},\dots,v_0]$ with diameter $\alpha$ and exactly one full-length edge, namely $v_{d-1}v_{d-2}$, this is an apparent pair with $\sigma = [v_{d-1},v_{d-2},v_{d-3},\dots,v_0]$ precisely if there are no vertices later than $v$ in an $\alpha$-neighborhood of $\sigma$.

That these are the only cases needed to consider for apparent pairs follows from the following:

\begin{proposition}
\label{thm:cases-for-cofacet}
    A simplex $\tau=[v_d,\dots,v_0]$, with vertices written in increasing order, can be an apparent pair cofacet of exactly one of $[v_{d-1},\dots,v_0]$, $[v_d,v_{d-2},\dots,v_0]$ or $[v_d,v_{d-1},v_{d-3},\dots,v_0]$.
\end{proposition}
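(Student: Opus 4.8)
The plan is to obtain the proposition as an essentially immediate consequence of \cref{thm:three-cases}, the only real work being to record that its three cases partition all possibilities. Write $\alpha$ for the diameter of $\tau$. First I would invoke the definition of an apparent pair: if $\tau$ is an apparent pair cofacet of a facet $\sigma$, then in particular $\sigma$ must be the lexicographically first same-diameter facet of $\tau$. Since every facet of $\tau$ has diameter at most $\alpha$, ``same-diameter facet'' coincides with ``full-diameter facet'' here, so \cref{thm:three-cases} applies directly and tells us that this distinguished facet is one of $[v_{d-1},\dots,v_0]$, $[v_d,v_{d-2},\dots,v_0]$, or $[v_d,v_{d-1},v_{d-3},\dots,v_0]$ --- that is, $\sigma_d$, $\sigma_{d-1}$, $\sigma_{d-2}$ in the notation of the proof of that theorem. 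Consequently $\tau$ can be an apparent pair cofacet of at most one of these three, and of no other facet of $\tau$.

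To upgrade ``at most one'' to ``exactly one'' I would check that the three cases of \cref{thm:three-cases} are mutually exclusive and exhaustive, directly from the metric structure of $\tau$. Let $E$ be the set of $\alpha$-length edges of $\tau$ that avoid $v_d$; these are precisely the $\alpha$-length edges of $\sigma_d$. If $E \neq \emptyset$ we are in case~1. Otherwise every $\alpha$-length edge meets $v_d$; if moreover some $\alpha$-length edge has the form $v_d v_i$ with $i \le d-2$ we are in case~2; and if not, then the only edge at $v_d$ that can be $\alpha$-length is $v_d v_{d-1}$, which must in fact be $\alpha$-length because $\tau$ has diameter $\alpha$, so $v_d v_{d-1}$ is the unique full-length edge and we are in case~3. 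This shows exhaustiveness. For exclusivity: in cases~2 and~3 one has $E=\emptyset$, which rules out case~1; and in case~3 the sole full-length edge $v_d v_{d-1}$ is not an edge of $\sigma_{d-1}=[v_d,v_{d-2},\dots,v_0]$, so $\sigma_{d-1}$ has diameter strictly below $\alpha$ and the hypothesis of case~2 fails. Hence precisely one case holds, precisely one of the three facets is the lexicographically first same-diameter facet of $\tau$, and that is the unique facet of which $\tau$ is eligible to be an apparent pair cofacet.

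I do not expect a genuine obstacle here; the statement is a bookkeeping corollary of \cref{thm:three-cases}. The points needing care are making the pairwise incompatibility of the three case hypotheses explicit --- so that the word ``exactly'' is warranted --- and reading the claim correctly: ``$\tau$ can be an apparent pair cofacet of exactly one of'' the three asserts that exactly one of them is even eligible, namely the lexicographically first same-diameter facet, and not that $\tau$ necessarily lies in an apparent pair at all. For completeness one should also observe that when $d \ge 2$ --- the regime relevant to the simplex-stream construction, since vertices and edges are handled separately --- the facets $\sigma_d$, $\sigma_{d-1}$, $\sigma_{d-2}$ are genuinely three distinct facets of $\tau$.
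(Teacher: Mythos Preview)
Your proposal is correct and follows essentially the same approach as the paper: both arguments identify the lexicographically first same-diameter facet of $\tau$ via the three-way case split on the location of the full-length edges. The only structural difference is that you factor cleanly through \cref{thm:three-cases} and then explicitly verify that its three hypotheses are mutually exclusive and exhaustive, whereas the paper's proof re-derives the case analysis inline without citing the theorem; your version is slightly tidier in that respect.
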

\begin{proof}
    The lexicographically first same-diameter facet of $\tau$ is $[v_{d-1},\dots,v_0]$, except if the exclusion of $v_{d-1}$ lowers the diameter.

    This exclusion lowers the diameter precisely if all full-length edges in $\tau$ connect to $v_d$. If there is at least one such edge that does not connect to $v_{d-1}$, the lexicographically first same-diameter facet of $\tau$ is $[v_d,v_{d-2},\dots,v_0]$.

    If the only full-length edge in $\tau$ is $v_dv_{d-1}$, then the lexicographically first same-diameter facet is $[v_d,v_{d-1},v_{d-3},\dots,v_0]$.

    For any other facet $[v_d,\dots,\widehat{v_j},\dots,v_0]$, there is an earlier facet given by one of these three, which has the same diameter by construction and comes earlier since $v_d > v_{d-1}$ (for the first case), $v_{d-1} > v_{d-2}$ (for the second case) or $v_{d-2} > v_{d-3}$ (for the third case).
\end{proof}

\subsection{In-order Simplex Streams}
\label{sec:in-order-simplex-streams}

\cref{thm:three-cases} further enables us to create a simplex stream generation algorithm that avoids storing the entire previous layer of simplices in memory to create the next.
Instead of iterating through all $(d-1)$-simplices to create all $d$-simplices, we can iterate through the edges and for each edge create cofacets and cofacets of cofacets until we hit our target dimension.
In this way, apparent pair simplices are not stored at all -- they show up transiently when generating their cofacets for higher dimensional simplex enumeration, but reduce the overall memory load.

We have implemented this approach using a stack of enumerator objects, each walking through the top cofacets of its assigned simplex.
Once the top enumerator is empty, it gets popped off the stack, and the next enumerator is advanced to find a simplex to assign to a new enumerator.

Writing the peak memory usage for the simplex streams in Ripser as 
\[
O\left(|\Sigma_{d-1}|\cdot\texttt{sizeof(simplex)} + |\Sigma_d\setminus \{\text{apparent pair simplices}\}|\cdot\overline{|\delta \sigma|}\cdot\texttt{sizeof(simplex)}\right)
\]
where $\Sigma_{d-1}, \Sigma_d$ are the simplices of dimension $d-1$ and $d$ respectively, and $\overline{|\delta\sigma|}$ is the mean number of non-zero entries in a coboundary of a $(d-1)$-simplex, this change would instead create a peak memory usage of 
\[
O\left(d\cdot\texttt{sizeof(enumerator)} + |\Sigma_d\setminus \{\text{apparent pair simplices}\}|\cdot\overline{|\delta \sigma|}\cdot\texttt{sizeof(simplex)}\right)
\]

Each enumerator in this stack holds the simplex $\sigma=[v_d,\dots,v_0]$ of diameter $\alpha$ for which it generates cofacets, as well as the intersection $N$ of all $\alpha$-neighborhoods of its vertices.

\begin{remark}\label{rmk:cofacet-enumeration}
Enumerating the cofacets then proceeds by:

\begin{itemize}
    \item For each $v\in N$, such that $v>v_d$, emit the corresponding simplex $[v,v_d,\dots,v_0]$.
    \item If each full-length edge in $\sigma$ is incident with $v_d$, proceed with all $v\in N$ such that $v_d > v > v_{d-1}$. For each of these, if all vertices of $\sigma$ except possibly $v_d$ are strictly shorter than $\alpha$, emit the corresponding simplex $[v_d,v,v_{d-1},\dots,v_0]$.
    \item If $[v_d,v_{d-1}]$ is the only full-length edge in $\sigma$, proceed with $v\in N$ such that $v_{d-1} > v > v_{d-2}$ (if $\sigma$ is an edge, interpret $v_{d-2}=-\infty$, and include all vertices in the consideration). If all distances from $v$ to a vertex in $\sigma$ are strictly shorter than $\alpha$, emit the corresponding simplex $[v_d,v_{d-1},v,v_{d-2}]$.
\end{itemize}
\end{remark}

\subsubsection{Representing full-length edges}
\label{sec:representing-full-length-edges}

Since the criteria in \cref{thm:three-cases} concern the edges of full length, it would be useful to have a lean way of accessing information about these full length edges without iterating through $O(d^2)$ distances every time.
We suggest tracking the lexicographically first edge (written with vertices in descending order) among the full-length edges.

\begin{proposition}\label{thm:full-length-edge}
    Let $[s,t]$ be the lexicographically first full-length edge of a simplex $[v_d,\dots,v_0]$.
    Then
    \begin{enumerate}
        \item\label{case:case-2} If $s=v_d$, then all full-length edges are incident to $v_d$.
        \item\label{case:case-3} If $s=v_d$ and $t=v_{d-1}$, then this is a unique full-length edge.
        \item\label{case:same-longedge} $[s,t]$ is the lexicographically first full-length edge of every cofacet $\tau\supseteq\sigma$ generated according to the method in \cref{rmk:cofacet-enumeration}.
    \end{enumerate}
\end{proposition}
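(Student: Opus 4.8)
The plan is to fix conventions first, then dispatch parts~\ref{case:case-2} and~\ref{case:case-3} directly, and finally reduce part~\ref{case:same-longedge} to a three-way check keyed to the three bullets of \cref{rmk:cofacet-enumeration}. Write $\sigma:=[v_d,\dots,v_0]$ and let $\alpha$ be its diameter, so a full-length edge is an edge of length $\alpha$ and $d(s,t)=\alpha$. Recall that an edge is recorded with its larger endpoint first and that edges are compared lexicographically from the left, so ``lexicographically first'' means lexicographically smallest; and note that since $v_d$ is the maximal vertex of $\sigma$, an edge of $\sigma$ is incident to $v_d$ exactly when its larger endpoint is $v_d$. With this, part~\ref{case:case-2} is immediate: if $s=v_d$ then any full-length edge $[a,b]$ satisfies $[a,b]\ge[v_d,t]$, so its larger endpoint $a$ satisfies $a\ge v_d$, hence $a=v_d$ by maximality, i.e.\ the edge is incident to $v_d$. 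Part~\ref{case:case-3} then follows: every full-length edge now has the form $[v_d,b]$ with $b$ a vertex of $\sigma$ other than $v_d$, so $b\le v_{d-1}$, while $[v_d,b]\ge[v_d,v_{d-1}]$ forces $b\ge v_{d-1}$; hence $b=v_{d-1}$ and $[v_d,v_{d-1}]$ is the unique full-length edge.

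For part~\ref{case:same-longedge}, the first step I would take is to observe that every cofacet $\tau=\sigma+v$ emitted by \cref{rmk:cofacet-enumeration} again has diameter $\alpha$: in each of the three bullets the new vertex lies in the common neighborhood $N$, so $d(v,v_i)\le\alpha$ for every vertex $v_i$ of $\sigma$, while $\sigma\subseteq\tau$ already realizes $\alpha$. Consequently ``full length'' has the same meaning in $\tau$ as in $\sigma$, the full-length edges of $\tau$ are exactly those of $\sigma$ together with the new edges $\{v,v_i\}$ of length $\alpha$, and $[s,t]$ in particular remains a full-length edge of $\tau$. Since every full-length edge of $\sigma$ other than $[s,t]$ is already lexicographically larger than $[s,t]$, it suffices to show, for each bullet, that any new full-length edge is lexicographically larger than $[s,t]$ --- or that none exists.

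The three cases then go as follows. In the first bullet $v>v_d$, so every new edge $\{v,v_i\}$ has larger endpoint $v$, which strictly exceeds the larger endpoint of $[s,t]$; hence $\{v,v_i\}>[s,t]$. In the second bullet the hypothesis ``every full-length edge of $\sigma$ is incident to $v_d$'' forces $s=v_d$, so $t$ is a vertex of $\sigma$ distinct from $v_d$ and therefore $t\le v_{d-1}$; the emission condition forces $d(v,v_i)<\alpha$ for all $i\neq d$, so the only candidate new full-length edge is $[v_d,v]$, and since the prescribed range $v_d>v>v_{d-1}$ gives $v>v_{d-1}\ge t$, we get $[v_d,v]>[v_d,t]=[s,t]$. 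In the third bullet the emission condition forces $d(v,v_i)<\alpha$ for every vertex $v_i$ of $\sigma$, so $\tau$ has no new full-length edge at all and its full-length edges coincide with those of $\sigma$, namely just $[v_d,v_{d-1}]=[s,t]$; the degenerate subcase where $\sigma$ is itself an edge is the same statement, $\sigma=[s,t]$ being its own unique full-length edge. In all three cases $[s,t]$ is therefore the lexicographically first full-length edge of $\tau$, which proves part~\ref{case:same-longedge}.

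I do not expect a deep obstacle: once the diameter of the emitted cofacets is seen to be unchanged, what remains is a finite verification. The one spot that needs care is aligning the lexicographic inequality $[v_d,v]>[v_d,t]$ in the second bullet with the precise vertex range $v_d>v>v_{d-1}$ prescribed by \cref{rmk:cofacet-enumeration}: the point is that the second endpoint $t$ of $[s,t]$, being a vertex of $\sigma$ strictly below $v_d$, satisfies $t\le v_{d-1}$, which is exactly why no vertex $v$ in that range can create a full-length edge preceding $[s,t]$. Everything else is bookkeeping.
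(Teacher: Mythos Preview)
Your proof is correct and follows essentially the same approach as the paper: the contrapositive for part~\ref{case:case-2}, the squeeze $t\le v_{d-1}\le b$ for part~\ref{case:case-3}, and the three-way split on the bullets of \cref{rmk:cofacet-enumeration} for part~\ref{case:same-longedge}. Your write-up is in fact a bit more explicit than the paper's (notably the preliminary observation that the emitted cofacets retain diameter $\alpha$, and the handling of the edge case), but the underlying argument is the same.
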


Note that by \cref{case:case-2} and \cref{case:case-3} in \cref{thm:full-length-edge}, we can determine whether to consider any $v_d>v$ at all, and what the lowest index vertex to consider is directly from the pair of simplex and first longest edge.
\cref{case:same-longedge} tells us we can create a cofacet simplex object that inherits the first full-length edge from $\sigma$ without any further distance calculations.

\begin{proof}
    For \cref{case:case-2}, note that if $[a,b]$ is a full-length edge in $\sigma$, with $a>b$ and $a\neq v_d$, then $[a,b]$ is lexicographically earlier than any $[v_d,t]$.
    Hence, if the largest vertex of the first full-length edge is $v_d$, then all full-length edges are incident to $v_d$.

    For \cref{case:case-3}, we observe that if $[v_d,t]$ is full-length and $t < v_{d-1}$, then $[v_d,t]$ is lexicographically earlier than $[v_d,v_{d-1}]$.

    Finally, for \cref{case:same-longedge}, if $v > v_d$, then all $[v,v_j]$ come later than the existing $[s,t]$. If instead $v_d > v > v_{d-1}$, then there is already an edge $[v_d,v_j]$ of full length. Since $v>v_{d-1}\geq v_j$, the edge $[v_d,v_j]$ is earlier than $[v_d,v]$. No other $[v,v_i]$ are allowed to be full length. Finally, if $v_{d-1}>v>v_{d-2}$, none of the new edges are allowed to be full length.
\end{proof}

\subsection{Coboundary generation}
\label{sec:coboundary-generation}

We can use a similar approach to generate the collection of \emph{all} cofacets of a given face, in filtration order. Each cofacet of a simplex $\sigma$ is determined by a single vertex to be added, and the challenge here is to traverse these candidate vertices in order of maximal distance to the simplex.

To fix notation let us assume that we have a simplex $\sigma$, with diameter $\alpha$, and we seek to enumerate its cofacets in filtration order.

Drawing on the suggestions in \cite{aggarwal_dory_2024-1}, we will assume that we for each vertex store its neighbors in distance order in a datastructure $\mathcal{N}$, so that $\mathcal{N}(v)$ is a sorted list of neighbors of $v$. We will write $\mathcal{N}(v)(\leq\alpha)$ for the prefix sublist of this list with distances at most $\alpha$.

Our proposed approach for generating a full coboundary is specifically designed to quickly produce and report back an apparent pair, if the simplex is part of one.

Concretely this means that among the filtration value preserving vertices that can be added to $\sigma$, the cofacet $\tau$ is created by picking the last possible vertex. This picked vertex, in $\tau$, needs to be the first possible vertex that can be dropped without reducing diameter.

\subsubsection{Data structure choices}

In our proposed coboundary algorithm, we maintain:
\begin{itemize}
    \item For the underlying metric space, a lookup table (as an array of arrays, or as an associative array containing arrays, for instance) $\mathcal{N}$ that for each vertex $v$ produces a list of all other vertices in order of distance from $v$ annotated with their distance to $v$. See \cref{app:worked-example} for a sketch of what this structure could look like.
    \item For each simplex $\sigma$ we are working with, a collection (as an array, or an associative array containing integers) $\mathcal{P}$ with one integer index for each vertex in $\sigma$ so that $\mathcal{P}(v)$ is tracking how far through the sorted list in $\mathcal{N}(v)$ the algorithm has progressed so far.
    \item An associative array $\mathcal{V}$ that to each vertex $w\not\in\sigma$ that we inspect tracks how many times we have encountered $w$ as a neighbor to some vertex in $\sigma$. If $\mathcal{V}(w)=|\sigma|$, we know that $w$ is in the intersection of all the neighborhoods.
\end{itemize}

\subsubsection{Traverse the $\alpha$-neighborhoods of $\sigma$}

As a first step, in order to report back potential apparent pairs, the algorithm needs to traverse the $\alpha$-neighborhoods of all vertices of $\sigma$ to find the smallest vertex in their intersection.

We do this by linear traversal, increasing the pointer index in $\mathcal{P}(v)$ and storing the vertices observed in a heap data structure $\mathcal{V}_{L1}$, in order to easily recover the observed elements in vertex order.

This initial traversal can then be described as in \cref{alg:first-traversal}.

\begin{algorithm}
\begin{algorithmic}
\For{$v\in \sigma$}
\State $i\gets 0$
\While{$\mathcal{N}(v)(i)(\text{time}) < \alpha$ and $i<|\mathcal{N}(v)|$}
\State $\mathcal{V}_{L1}.\text{enqueue}(\mathcal{N}(v)(i)(\text{vertex}))$
\State $i\gets i+1$
\EndWhile
\State $\mathcal{P}(v) = i$
\EndFor
\end{algorithmic}
\caption{Initial traversal of the $\alpha$-neighborhood of $\sigma$}\label{alg:first-traversal}
\end{algorithm}

When \cref{alg:first-traversal} returns, we have added all vertices in all $\alpha$-neighborhoods of the vertices of $\sigma$ to a heap.
The next step is to pull from that heap, in vertex order, until we find the first vertex that is in all neighborhoods of the vertices of $\sigma$.
We describe this in \cref{alg:find-apparent-pair}.

\begin{algorithm}
\begin{algorithmic}
\While{$\mathcal{V}_{L1} \neq \emptyset$}
\State $w\gets \mathcal{V}_{L1}.\text{pop}$
\If{$w\in\mathcal{V}$}
\State $\mathcal{V}(w)\gets \mathcal{V}(w)+1$
\Else
\State $\mathcal{V}(w)\gets 1$
\EndIf
\If{$\mathcal{V}(w)=|\sigma|$}
\State \textbf{break} out of the while loop and return control
\EndIf
\EndWhile
\end{algorithmic}
\caption{Finding the first vertex in the intersection of all neighborhoods.}\label{alg:find-apparent-pair}
\end{algorithm}

After \cref{alg:find-apparent-pair} finishes, either $\mathcal{V}_{L1}$ is empty, or $w$ is the vertex that could produce an apparent pair.
We can check for an apparent pair at this stage, and skip all the subsequent steps when we find one.

If we do want to continue past the apparent pair check, we will keep on moving vertices from $\mathcal{V}_{L1}$ to $\mathcal{V}$, tracking whether the count in $\mathcal{V}$ is large enough to indicate an element of the intersection.
Each time we do find something in the intersection, we can add the cofacet to our output, and remove the entry in the vertex counting map.
We describe this in \cref{alg:immediate-cofacets}.

\begin{algorithm}
\begin{algorithmic}
\Require $w$, $\mathcal{V}_{L1}$, $\mathcal{V}$, $\mathcal{P}$ and $\mathcal{N}$ as described above and corresponding to the end state of \cref{alg:find-apparent-pair}.
\State $\mathcal{O} \gets \emptyset$
\If{$|\mathcal{V}(w)|=|\sigma|$}
\State $\mathcal{V}.\text{remove}(w)$
\State $\mathcal{O}.\text{append}(w)$
\EndIf
\While{$\mathcal{V}_{L1}\neq\emptyset$} 
\If{$|\mathcal{V}|(w)=|\sigma|$}
\State $\mathcal{V}.\text{remove}(w)$
\State $\mathcal{O}.\text{append}(w)$
\EndIf
\State $w\gets \mathcal{V}_{L1}.\text{pop}$
\If{$w\in\mathcal{V}$}
\State $\mathcal{V}(w)\gets \mathcal{V}(w)+1$
\Else
\State $\mathcal{V}(w)\gets 1$
\EndIf
\EndWhile
\end{algorithmic}
\caption{Generating all the cofacets of diameter $\alpha$. This algorithm benefits greatly from generating an \texttt{Iterator} and yielding control whenever $\mathcal{O}$ receives an additional element.}\label{alg:immediate-cofacets}
\end{algorithm}

\cref{alg:immediate-cofacets} can also be written to instead first update all counts in $\mathcal{V}$, emptying out $\mathcal{V}_{L1}$ entirely, and then picking out entries in $\mathcal{V}$ of full size to add the results to $\mathcal{O}$.

At the end of \cref{alg:immediate-cofacets}, all indexes in $\mathcal{P}$ point to vertices that are not in the $\alpha$-neighborhood of $\sigma$ and $\mathcal{O}$ has received all cofacets of diameter $\alpha$.
The final phase of the algorithm continues to advance whichever index in $\mathcal{P}$ that points to a closest neighbor, updating $\mathcal{V}$ and outputting whenever the new count in $\mathcal{V}$ is equal to the simplex size.
We describe this in \cref{alg:remaining-cofacets}.

\begin{algorithm}
\begin{algorithmic}
\Require $w$, $\mathcal{V}$, $\mathcal{P}$ and $\mathcal{N}$ as described above and corresponding to the end state of \cref{alg:immediate-cofacets}.
\Loop
\State $v \gets \arg\min_{v\in\sigma} d\left(v, \mathcal{N}(v)(\mathcal{P}(v)\right)$
\State $w \gets \mathcal{N}(v)(\mathcal{P}(v))$
\If{$w\in\mathcal{V}$}
\State $\mathcal{V}(w)\gets \mathcal{V}(w)+1$
\Else
\State $\mathcal{V}(w)\gets 1$
\EndIf
\If{$|\mathcal{V}(w)=|\sigma|$}
\State $\mathcal{V}.\text{remove}(w)$
\State $\mathcal{O}.\text{append}(w)$
\EndIf
\State $\mathcal{P}(v) \gets \mathcal{P}(v)+1$
\If{$\mathcal{P}(v)\geq |\mathcal{N}(v)|$}
\State \textbf{break} out of the loop
\EndIf
\EndLoop
\end{algorithmic}
\caption{Generating all the cofacets of diameter greater than $\alpha$.}\label{alg:remaining-cofacets}
\end{algorithm}

When any one of the indexes stored in $\mathcal{P}$ is larger than the list that it is indexing into, we know that there can be no additional vertices under consideration other than the ones that are stored in $\mathcal{P}$.
At this stage, we could either continue running \cref{alg:remaining-cofacets}, ignoring any neighbor lists that have been exhausted and output vertices that do generate cofacets in filtration order, or we could go directly to the metric space representation and compute the diameters of the remaining potential cofacets directly, sort those, and output them.

\section{Experiments}

\subsection{Simplex Stream generation}

To test the performance of the filtration order approach to computing simplex streams, we draw on previous work on benchmarking TDA packages \cite{RJ-2021-033}.
Based on their work, we are comparing computation times between our implementation of the approach used in Ripser, with our implementation of the method introduced in \cref{sec:simplex-stream}.
We compute and traverse the Vietoris-Rips complex up to simplices of dimension 4 (for $25$ or $50$ data points) or up to simplex dimension 3 (for larger data sets) of point clouds sampled uniformly at random in the unit cube.

The resulting timings can be seen in \cref{fig:traversal-timings}.
Simplex dimensions 0 and 1 are handled by listing vertices and pairs of vertices directly, which gives the Ripser approach an edge with less setup work, but with increasing problem sizes that edge quickly goes away.
From simplex dimension 2 and up, we see a consistent advantage to the in-order generation of about a factor $10\times$ across the board.
There is no clear reason to believe that there is an asymptotic difference in complexity: slopes in these logarithmic-scaled graphs look very similar.

\begin{figure}
    \includegraphics[width=1.25\linewidth]{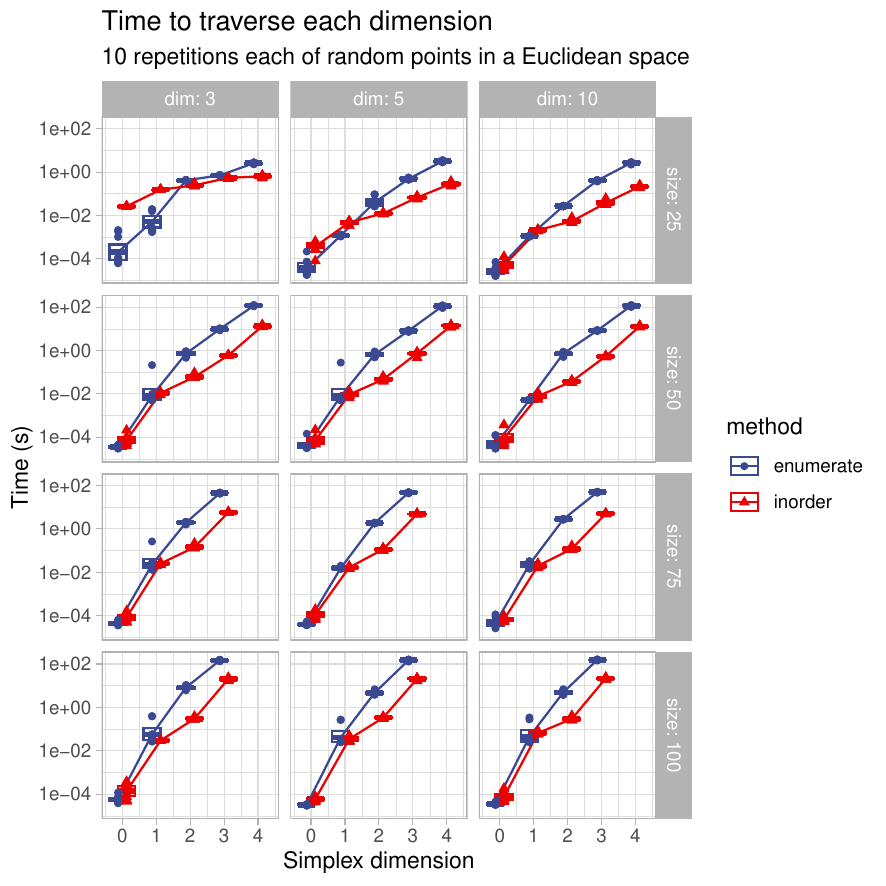}
    \caption{Time required to traverse each skeleton layer from 0 to 4 of a Vietoris-Rips complex on random points in $\mathbb{R}^d$.
    Mean and a non-parametric bootstrap confidence interval for the mean is displayed in the box, while each observation is a point: blue/left/round for the enumeration method used in Ripser and red/right/triangle for the method we propose here.
    Dimensions 0 and 1 are handled by directly writing vertex lists, or generating and sorting edges, while higher dimensions use the two different algorithms.
    Hence, the differences that can be seen in speeds for dimensions 0 and 1 are mainly due to a more elaborate setup for the inorder method.}
    \label{fig:traversal-timings}
\end{figure}

\subsection{Apparent Pair detection}

To measure the degree to which this approach improves on the detection of apparent pairs, we used the ScalaCheck library to generate random point clouds in $\mathbb{R}^5$ with between 25 and 250 points.
Limiting the neighborhood searches to the minimum enclosing radius, we computer Vietoris-Rips cofacets. For each point cloud, we picked a random simplex and iterated through its cofacets, measuring how many vertices $w$ were seen in the initiation phase.
Recall that at the end of this phase, as described in \cref{alg:first-traversal} and \cref{alg:find-apparent-pair}, if there could be an apparent pair, we have the vertex that could produce it.
Comparing the number of visited vertices to the index number of the resulting vertex gives us a comparison between our method and the index-ordered methods in Ripser for the speed of finding an Apparent Pair.

We show a comparison of the number of vertices visited to find an Apparent Pair candidate in \cref{fig:vertex-visits}.

\begin{figure}
    \hspace{-3cm}
    \begin{minipage}{1.2\linewidth}
    \input{visitsplot}
    \input{visitratioplot}        
    \end{minipage}
    \caption{The number of vertices checked before a candidate for an Apparent Pair is found, compared between our method and the method used in Ripser.\newline%
    We compare the two methods with two ECDF plots: in each plot, the curve indicates the proportion of observations with value less than or equal to the $x$-coordinate. To the left, simultaneous plots of the number of vertices visited before finding an Apparent Pair candidate, we see that the filtration order methods find Apparent Pair candidates vastly much faster than the Ripser methods. To the right, we show the paired behavior: for the same choice of complex and simplex, we take the ratio of the number of candidates visited for our method, divided by the number for the method in Ripser. We note that the break-even point is at around 80-85\%: in the vast majority of cases, our method is faster, in some cases by several orders of magnitude.}
    \label{fig:vertex-visits}
\end{figure}

\subsection{Experiments in Ripser}
\label{sec:experiments-in-ripser}

In order to get a measure of how well these proposed methods fare in the Ripser code base, we changed the Ripser code base to include using the fast detection and the simplex stream generation using \cref{thm:three-cases}.

In \cref{tbl:ripser-time-mem}, we show execution time measurements and peak memory consumption for 5 benchmarking datasets from the original Ripser paper, and the Ripser repository.
Each measurement was done 10 times on a 

\begin{table}
\begin{tabular}{lrrrrrrr}
\toprule
 &  &  &  & \multicolumn{2}{c}{time} & \multicolumn{2}{c}{mem} \\ 
\cmidrule(lr){5-6} \cmidrule(lr){7-8}
 &  &  &  & ripser & inorder & ripser & inorder \\ 
\cmidrule(lr){5-5} \cmidrule(lr){6-6} \cmidrule(lr){7-7} \cmidrule(lr){8-8}
case & n & dim & thresh. & mean (sd) & mean (sd) & mean (sd) & mean (sd) \\ 
\midrule
o3\_1024 & 1\,024 & 4 & 2.000 & \textbf{66s} (11s) & 97s (8s) & 169.8 MB (35.9 kB) & \textbf{59.1 MB} (34.7 kB) \\ 
o3\_4096 & 4\,096 & 4 & 1.500 & \textbf{1\,302s} (138s) & 1\,916s (131s) & 2.3 GB (45.5 kB) & \textbf{722.6 MB} (48.5 kB) \\ 
CycloOctane & 6\,040 & 2 & 1.500 & \textbf{1\,304s} (104s) & 1\,395s (108s) & 8.9 GB (59.4 kB) & \textbf{7.1 GB} (67.1 kB) \\ 
rp2\_600 & 600 & 3 & 1.000 & \textbf{176s} (25s) & 233s (23s) & 186.5 MB (38.6 kB) & \textbf{173.2 MB} (30.8 kB) \\ 
sphere\_3\_192 & 192 & 3 & 1.974 & \textbf{32s} (5s) & 40s (4s) & 122.6 MB (44.7 kB) & \textbf{109 MB} (42.3 kB) \\ 
\bottomrule
\end{tabular}
    \caption{Mean (and standard deviation) of sets of 10 runs with each code base for 5 different benchmarking datasets used in the original Ripser paper.}
    \label{tbl:ripser-time-mem}
\end{table}

\section{Discussion}
\label{sec:discussion}

We have presented algorithms for generating cofacets in order to create a simplex sequence, to recognize apparent pairs or to create coboundaries of specific simplices.

By creating a simplex sequence with a stack of enumerators, the entirety of the simplex layer $\Sigma_{d-1}$ need no longer be created before $\Sigma_d$ is enumerated. Instead, we can create $\Sigma_d$ directly, with minimal memory residency. 

In our approach to recognizing apparent pairs, \cref{fig:vertex-visits} shows that we recognize an apparent pair far earlier than the existing approach in Ripser, which cuts down on the work needed before the simplex can be abandoned entirely.

\cref{tbl:ripser-time-mem} shows that these two improvements, when implemented in Ripser, contribute to an increase in computation time -- in our experiments ranging between $7\%$ and $50\%$, and to a consistent sometimes dramatic decrease in memory consumption -- ranging between $7\%$ and $65\%$ improvements. The improved memory consumption is clearer for memory intensive computations (in very small cases, the overhead of the extra book-keeping is noticeable), and clearer for higher homological dimension.

While this paper has been written in the context of Vietoris-Rips complexes, we believe that generalizations to clique complexes of other filtered graphs, or to {\v C}ech style complexes should be possible.

We further believe that these results stake out a path for experimenting in the Ripser code base with representing each simplex coboundary by an enumerator, and only generating the next simplex only when cancellations happen. A linear combination of $k$ different simplex coboundaries would then be represented by $k$ different enumerator objects, where the enumerator with the smallest current simplex would be carrying the pivot. This approach is directly inspired by \cite{aggarwal_dory_2024-1} and by conversations with Florian Roll.

Another direction suggested by this research is to follow the direction suggested in \cite{vejdemo2011interleaved}: generate each simplex directly before reducing it, and keep \emph{nothing} in memory until it is needed. Instead of ever needing to represent a full layer of simplices (or even simplices not in apparent pairs), we would only need to represent simplices whose coboundaries have already been reduced to a non-zero coboundary basis element.

\bibliography{main.bib}

\appendix
\section{Worked example}
\label{app:worked-example}

Consider $8$ points distributed evenly on the unit circle, and their Vietoris-Rips complex.
They have distance matrix:

\[
\text{dist} = \begin{pmatrix}
  0.00 & 0.77 & 1.41 & 1.85 & 2.00 & 1.85 & 1.41 & 0.77 \\
  0.77 & 0.00 & 0.77 & 1.41 & 1.85 & 2.00 & 1.85 & 1.41 \\
  1.41 & 0.77 & 0.00 & 0.77 & 1.41 & 1.85 & 2.00 & 1.85 \\
  1.85 & 1.41 & 0.77 & 0.00 & 0.77 & 1.41 & 1.85 & 2.00 \\
  2.00 & 1.85 & 1.41 & 0.77 & 0.00 & 0.77 & 1.41 & 1.85 \\
  1.85 & 2.00 & 1.85 & 1.41 & 0.77 & 0.00 & 0.77 & 1.41 \\
  1.41 & 1.85 & 2.00 & 1.85 & 1.41 & 0.77 & 0.00 & 0.77 \\
  0.77 & 1.41 & 1.85 & 2.00 & 1.85 & 1.41 & 0.77 & 0.00 \\
  \end{pmatrix}
\]

The Dory-style neighborhood lists $\mathcal{N}$ are:

\begin{align*}
  \begin{pmatrix}0.00 \\ 0\end{pmatrix}\quad\begin{pmatrix}0.77 \\ 1\end{pmatrix}\quad\begin{pmatrix}0.77 \\ 7\end{pmatrix}\quad\begin{pmatrix}1.41 \\ 2\end{pmatrix}\quad\begin{pmatrix}1.41 \\ 6\end{pmatrix}\quad\begin{pmatrix}1.85 \\ 3\end{pmatrix}\quad\begin{pmatrix}1.85 \\ 5\end{pmatrix}\quad\begin{pmatrix}2.00 \\ 4\end{pmatrix} \\
  \begin{pmatrix}0.00 \\ 1\end{pmatrix}\quad\begin{pmatrix}0.77 \\ 0\end{pmatrix}\quad\begin{pmatrix}0.77 \\ 2\end{pmatrix}\quad\begin{pmatrix}1.41 \\ 3\end{pmatrix}\quad\begin{pmatrix}1.41 \\ 7\end{pmatrix}\quad\begin{pmatrix}1.85 \\ 4\end{pmatrix}\quad\begin{pmatrix}1.85 \\ 6\end{pmatrix}\quad\begin{pmatrix}2.00 \\ 5\end{pmatrix} \\
  \begin{pmatrix}0.00 \\ 2\end{pmatrix}\quad\begin{pmatrix}0.77 \\ 1\end{pmatrix}\quad\begin{pmatrix}0.77 \\ 3\end{pmatrix}\quad\begin{pmatrix}1.41 \\ 0\end{pmatrix}\quad\begin{pmatrix}1.41 \\ 4\end{pmatrix}\quad\begin{pmatrix}1.85 \\ 5\end{pmatrix}\quad\begin{pmatrix}1.85 \\ 7\end{pmatrix}\quad\begin{pmatrix}2.00 \\ 6\end{pmatrix} \\
  \begin{pmatrix}0.00 \\ 3\end{pmatrix}\quad\begin{pmatrix}0.77 \\ 4\end{pmatrix}\quad\begin{pmatrix}0.77 \\ 2\end{pmatrix}\quad\begin{pmatrix}1.41 \\ 1\end{pmatrix}\quad\begin{pmatrix}1.41 \\ 5\end{pmatrix}\quad\begin{pmatrix}1.85 \\ 6\end{pmatrix}\quad\begin{pmatrix}1.85 \\ 0\end{pmatrix}\quad\begin{pmatrix}2.00 \\ 7\end{pmatrix} \\
  \begin{pmatrix}0.00 \\ 4\end{pmatrix}\quad\begin{pmatrix}0.77 \\ 3\end{pmatrix}\quad\begin{pmatrix}0.77 \\ 5\end{pmatrix}\quad\begin{pmatrix}1.41 \\ 2\end{pmatrix}\quad\begin{pmatrix}1.41 \\ 6\end{pmatrix}\quad\begin{pmatrix}1.85 \\ 1\end{pmatrix}\quad\begin{pmatrix}1.85 \\ 7\end{pmatrix}\quad\begin{pmatrix}2.00 \\ 0\end{pmatrix} \\
  \begin{pmatrix}0.00 \\ 5\end{pmatrix}\quad\begin{pmatrix}0.77 \\ 6\end{pmatrix}\quad\begin{pmatrix}0.77 \\ 4\end{pmatrix}\quad\begin{pmatrix}1.41 \\ 3\end{pmatrix}\quad\begin{pmatrix}1.41 \\ 7\end{pmatrix}\quad\begin{pmatrix}1.85 \\ 0\end{pmatrix}\quad\begin{pmatrix}1.85 \\ 2\end{pmatrix}\quad\begin{pmatrix}2.00 \\ 1\end{pmatrix} \\
  \begin{pmatrix}0.00 \\ 6\end{pmatrix}\quad\begin{pmatrix}0.77 \\ 5\end{pmatrix}\quad\begin{pmatrix}0.77 \\ 7\end{pmatrix}\quad\begin{pmatrix}1.41 \\ 0\end{pmatrix}\quad\begin{pmatrix}1.41 \\ 4\end{pmatrix}\quad\begin{pmatrix}1.85 \\ 3\end{pmatrix}\quad\begin{pmatrix}1.85 \\ 1\end{pmatrix}\quad\begin{pmatrix}2.00 \\ 2\end{pmatrix} \\
  \begin{pmatrix}0.00 \\ 7\end{pmatrix}\quad\begin{pmatrix}0.77 \\ 6\end{pmatrix}\quad\begin{pmatrix}0.77 \\ 0\end{pmatrix}\quad\begin{pmatrix}1.41 \\ 5\end{pmatrix}\quad\begin{pmatrix}1.41 \\ 1\end{pmatrix}\quad\begin{pmatrix}1.85 \\ 2\end{pmatrix}\quad\begin{pmatrix}1.85 \\ 4\end{pmatrix}\quad\begin{pmatrix}2.00 \\ 3\end{pmatrix} \\
  \end{align*}

\subsection{Simplex Stream generation}

\paragraph*{Dim 0}
Just the vertices.

\[
\mathcal{O} = [[0],[1],[2],[3],[4],[5],[6],[7]]
\]

\paragraph*{Dim 1}
Edges, in sorted order.

\begin{multline*}
\mathcal{O} = [[0,1],[0,7],[1,2],[2,3],[3,4],[4,5],[5,6],[6,7],\\
[0,2],[0,6],[1,3],[2,4],[3,5],[4,6],[5,7],\\
[0,3],[0,5],[1,4],[2,5],[3,6],[4,7],\\
[0,4],[1,5],[2,6],[3,7]] \\
\end{multline*}

\paragraph*{Dim 2}

These are the vertices to be considered in each case: for vertices in Case 1 we only need to check that they are non-expanding; in Case 2 that all longest edges connect to the first vertex; in Case 3 that no new edges are full length.

\begin{tabular}{l|l|l|l|l|l}
$\sigma$ & Case 1 & Case 2 & Case 3 & $\mathcal{O}$ & Comment \\
\hline
$[0,1]$ & & & 2,3,4,5,6,7 & \\
$[0,7]$ & & 1,2,3,4,5,6 & & \\
$[1,2]$ & 0 & & 3,4,5,6,7 & \\
$[2,3]$ & 0,1 & & 4,5,6,7 & \\
$[3,4]$ & 0,1,2 & & 5,6,7 & \\
$[4,5]$ & 0,1,2,3 & & 6,7 & \\
$[5,6]$ & 0,1,2,3,4 & & 7 & \\
$[6,7]$ & 0,1,2,3,4,5 & & & \\
$[0,2]$ & & 1 & 3,4,5,6,7 & 1 \\
$[0,6]$ & & 1,2,3,4,5 & 7 & 7 \\
$[1,3]$ & 0 & 2 & 4,5,6,7 & 2 \\
$[1,7]$ & 0 & 2,3,4,5,6 & & 0 \\
$[2,4]$ & 0,1 & 3 & 5,6,7 & 3 \\
$[3,5]$ & 0,1,2 & 4 & 6,7 & 4 \\
$[4,6]$ & 0,1,2,3 & 5 & 7 & 5 \\
$[5,7]$ & 0,1,2,3,4 & 6 & & 6 \\
$[0,3]$ & & 1,2 & 4,5,6,7 & 1,2 & a) \\
$[0,5]$ & & 1,2,3,4 & 6,7 & 3,6,7 \\
$[1,4]$ & 0 & 2,3 & 5,6,7 & 2,3,7 & b) \\
$[1,6]$ & 0 & 2,3,4,5 & 7 & 0,4,7 \\
$[2,5]$ & 0,1 & 3,4 & 6,7 & 0,3,4 \\
$[2,7]$ & 0,1 & 3,4,5,6 & & 0,1,5 \\
$[3,6]$ & 0,1,2 & 4,5 & 7 & 0,1,4,5 \\
$[4,7]$ & 0,1,2,3 & 5,6 & & 1,2,5,6 \\
$[0,4]$ & & 1,2,3 & 5,6,7 & 1,2,3,5,6,7 \\
$[1,5]$ & 0 & 2,3,4 & 6,7 & 0,2,3,4,6,7 \\
$[2,6]$ & 0,1 & 3,4,5 & 7 & 0,1,3,4,5,7 \\
$[3,7]$ & 0,1,2 & 4,5,6 & & 0,1,2,4,5,6
\end{tabular}

\begin{enumerate}[a)]
\item $\alpha=1.85$ so both 5 and 6 are non-expanding; only 5 makes $[0,3]$ the unique longest edge in the result.
\item Since $d(0,4)=2.00$, the option $0$ is not non-expanding. The option $6$ brings in too many longest edges, while $7$ keeps $[1,4]$ the unique longest edge.
\end{enumerate}

\paragraph*{Dim 3}

In this table, we highlight options that can be skipped completely due to the structure of $\sigma$; as in the case of $[0,1,7]$, where $d(0,1)$ is already less than $d(0,7)$, so in no cofacet of $[0,1,7]$ is $[0,1]$ the \emph{unique} longest edge; or in the case of $[1,4,7]$ where both $[1,4]$ and $[4,7]$ are longest edges, so it in no cofacet of $[1,4,7]$ do all longest edges connect to 1.

\begin{minipage}[t]{0.4\textwidth}
\begin{tabular}{l|l|l|l|l|l}
$\sigma$ & Case 1 & Case 2 & Case 3 & $\mathcal{O}$ \\
\hline
$[0,1,2]$ & & & &  \\
$[0,6,7]$ & & 1,2,3,4,5 & & \\
$[1,2,3]$ & 0 & & & \\
$[0,1,7]$ & & & \emph{2,3,4,5,6} & \\
$[2,3,4]$ & 0,1 & & & \\
$[3,4,5]$ & 0,1,2 & & & \\
$[4,5,6]$ & 0,1,2,3 & & & \\
$[5,6,7]$ & 0,1,2,3,4 & & & \\
$[0,1,3]$ & & & \emph{2} & \\
$[0,2,3]$ & & 1 & & 1 \\
$[0,2,5]$ & & 1 & \emph{3,4} & \\
$[0,3,5]$ & & 1,2 & 4 & \\
$[0,5,6]$ & & 1,2,3,4 & & \\
$[0,5,7]$ & & 1,2,3,4 & 6 & 6 \\
$[1,2,4]$ & 0 & & \emph{3} & \\
$[1,3,4]$ & 0 & 2 & & 2 \\
$[1,4,7]$ & 0 & \emph{2,3} & \emph{5,6} & \\
$[0,1,6]$ & & & \emph{2,3,4,5} & \\
$[1,4,6]$ & 0 & 2,3 & 5 & \\
$[1,6,7]$ & 0 & 2,3,4,5 & & 0 \\
$[0,2,5]$ & & \emph{1} & \emph{3,4} & \\
$[2,3,5]$ & 0,1 & & \emph{4} & \\
$[2,4,5]$ & 0,1 & 3 & & 3 \\
$[0,2,7]$ & & \emph{1} & \emph{3,4,5,6} & \\
$[1,2,7]$ & 0 & & \emph{3,4,5,6} & 0 \\
$[2,5,7]$ & 0,1 & \emph{3,4} & \emph{6} & 0 \\
$[0,3,6]$ & & \emph{1,2} & \emph{4,5} & \\
$[1,3,6]$ & 0 & \emph{2} & \emph{4,5} & 0 \\
$[3,4,6]$ & 0,1,2 & & \emph{5} & 1 \\
$[3,5,6]$ & 0,1,2 & 4 & & 4 \\
$[1,4,7]$ & 0 & \emph{2,3} & \emph{5,6} & \\
$[2,4,7]$ & 0,1 & \emph{3} & \emph{5,6} & 1 \\
$[4,5,7]$ & 0,1,2,3 & & \emph{6} & 2 \\
$[4,6,7]$ & 0,1,2,3 & 5 & & 1,5 \\
\end{tabular}
\end{minipage}
\hfill
\begin{minipage}[t]{0.4\textwidth}
\begin{tabular}{l|l|l|l|l|l}
$\sigma$ & Case 1 & Case 2 & Case 3 & $\mathcal{O}$ \\
\hline
$[0,1,4]$ & & & \emph{2,3} &  \\
$[0,2,4]$ & & 1 & \emph{3} & 1 \\
$[0,3,4]$ & & 1,2 & & 1,2 \\
$[0,4,5]$ & & 1,2,3 &  & 2,3 \\
$[0,4,6]$ & & 1,2,3 & 5 & 1,3,5 \\
$[0,4,7]$ & & 1,2,3 & 5,6 & 1,2,5,6 \\
$[0,1,5]$ & & & \emph{2,3,4} &  \\
$[1,2,5]$ & 0 & & \emph{3,4} & 0 \\
$[1,3,5]$ & 0 & 2 & \emph{4} & 0,2 \\
$[1,4,5]$ & 0 & 2,3 & & 0,2,3 \\
$[1,5,6]$ & 0 & 2,3,4 & & 0,3,4 \\
$[1,5,7]$ & 0 & 2,3,4 & 6 & 0,2,6 \\
$[0,2,6]$ &  & \emph{3,4,5} & \emph{7} & \\
$[1,2,6]$ & 0 & & \emph{3,4,5} & 0 \\
$[2,3,6]$ & 0,1 & & \emph{4,5} & 0,1 \\
$[2,4,6]$ & 0,1 & 3 & \emph{5} & 0,1,3 \\
$[2,5,6]$ & 0,1 & 3,4 & & 0,1,3,4 \\
$[2,6,7]$ & 0,1 & 3,4,5 & & 0,1,4,5 \\
$[0,3,7]$ & & \emph{1,2} & \emph{4,5,6} & \\
$[1,3,7]$ & 0 & \emph{2} & \emph{4,5,6} & 0 \\
$[2,3,7]$ & 0,1 & & \emph{4,5,6} & 0,1 \\
$[3,4,7]$ & 0,1,2 & & \emph{5,6} & 0,1,2 \\
$[3,5,7]$ & 0,1,2 & 4 & \emph{6} & 0,1,2,4 \\
$[3,6,7]$ & 0,1,2 & 4,5 & & 0,1,2,4,5 \\
\end{tabular}
\end{minipage}

\subsection{Apparent Pairs and Coboundary Generation}

Suppose we are looking for cofacets of $[0 1 3]$.
This simplex has diameter 1.85.

\paragraph*{\cref{alg:first-traversal}}
\begin{multline*}
v=0 \\
\mathcal{N}(v) = \begin{pmatrix}0.00\\0\end{pmatrix}\quad\begin{pmatrix}0.77 \\ 1\end{pmatrix}\quad\begin{pmatrix}0.77 \\ 7\end{pmatrix}\quad\begin{pmatrix}1.41 \\ 2\end{pmatrix}\quad\begin{pmatrix}1.41 \\ 6\end{pmatrix}\quad\begin{pmatrix}1.85 \\ 3\end{pmatrix}\quad\begin{pmatrix}1.85 \\ 5\end{pmatrix}\quad\begin{pmatrix}2.00 \\ 4\end{pmatrix}
\\
i = 5 \qquad\mathcal{V}_{L1} = [0,1,2,6,7] \qquad\mathcal{P}=[0\to 5]
\end{multline*}

\begin{multline*}
v=1 \\
\mathcal{N}(v) =   \begin{pmatrix}0.00 \\ 1\end{pmatrix}\quad\begin{pmatrix}0.77 \\ 0\end{pmatrix}\quad\begin{pmatrix}0.77 \\ 2\end{pmatrix}\quad\begin{pmatrix}1.41 \\ 3\end{pmatrix}\quad\begin{pmatrix}1.41 \\ 7\end{pmatrix}\quad\begin{pmatrix}1.85 \\ 4\end{pmatrix}\quad\begin{pmatrix}1.85 \\ 6\end{pmatrix}\quad\begin{pmatrix}2.00 \\ 5\end{pmatrix} \\
\\
i = 5 \qquad\mathcal{V}_{L1} = [0,0,1,1,2,2,3,6,7,7] \qquad\mathcal{P}=[0\to 5, 1\to 5]
\end{multline*}

\begin{multline*}
v=3 \\
\mathcal{N}(v) =   \begin{pmatrix}0.00 \\ 3\end{pmatrix}\quad\begin{pmatrix}0.77 \\ 4\end{pmatrix}\quad\begin{pmatrix}0.77 \\ 2\end{pmatrix}\quad\begin{pmatrix}1.41 \\ 1\end{pmatrix}\quad\begin{pmatrix}1.41 \\ 5\end{pmatrix}\quad\begin{pmatrix}1.85 \\ 6\end{pmatrix}\quad\begin{pmatrix}1.85 \\ 0\end{pmatrix}\quad\begin{pmatrix}2.00 \\ 7\end{pmatrix} \\
\\
i = 5 \qquad\mathcal{V}_{L1} = [0,0,1,1,1,2,2,2,3,3,4,5,6,7,7] \qquad\mathcal{P}=[0\to 5, 1\to 5, 3\to 5]
\end{multline*}

\paragraph*{\cref{alg:find-apparent-pair}}

\begin{tabular}{l|l|l}
$w$ & $\mathcal{V}_{L1}$ & $\mathcal{V}$ \\
\hline
$0$ & $[0,1,1,1,2,2,2,3,3,4,5,6,7,7]$ & $[0\to 1]$ \\
$0$ & $[1,1,1,2,2,2,3,3,4,5,6,7,7]$ & $[0\to 2]$ \\
$1$ & $[1,1,2,2,2,3,3,4,5,6,7,7]$ & $[0\to 2, 1\to 1]$ \\
$1$ & $[1,2,2,2,3,3,4,5,6,7,7]$ & $[0\to 2, 1\to 2]$ \\
$1$ & $[2,2,2,3,3,4,5,6,7,7]$ & $[0\to 2, 1\to 3]$ \\
$2$ & $[2,2,3,3,4,5,6,7,7]$ & $[0\to 2, 1\to 3, 2\to 1]$ \\
$2$ & $[2,3,3,4,5,6,7,7]$ & $[0\to 2, 1\to 3, 2\to 2]$ \\
$2$ & $[3,3,4,5,6,7,7]$ & $[0\to 2, 1\to 3, 2\to 3]$ \\
\end{tabular}

Here, we break out of the loop, and return $2$ as a candidate apparent pair vertex, for the coface $[0 1 2 3]$.
The preceding $1$ that also counted to 3 is ignored since it is one of the vertices of our query simplex.

\paragraph*{\cref{alg:immediate-cofacets}}
\begin{tabular}{l|l|l|l}
$w$ & $\mathcal{V}_{L1}$ & $\mathcal{V}$ & $\mathcal{O}$ \\
\hline
$3$ & $[3,4,5,6,7,7]$ & $[0\to 2, 1\to 3, 2\to 3, 3\to 1]$ & [2] \\
$3$ & $[4,5,6,7,7]$ & $[0\to 2, 1\to 3, 2\to 3, 3\to 2]$ & [2] \\
$4$ & $[5,6,7,7]$ & $[0\to 2, 1\to 3, 2\to 3, 3\to 2, 4\to 1]$ & [2] \\
$5$ & $[6,7,7]$ & $[0\to 2, 1\to 3, 2\to 3, 3\to 2, 4\to 1, 5\to 1]$ & [2] \\
$6$ & $[7,7]$ & $[0\to 2, 1\to 3, 2\to 3, 3\to 2, 4\to 1, 5\to 1,6\to 1]$ & [2] \\
$7$ & $[7]$ & $[0\to 2, 1\to 3, 2\to 3, 3\to 2, 4\to 1, 5\to 1,6\to 1, 7\to 1]$ & [2] \\
$7$ & $[]$ & $[0\to 2, 1\to 3, 2\to 3, 3\to 2, 4\to 1, 5\to 1,6\to 1, 7\to 2]$ & [2] \\
\end{tabular}

\paragraph*{\cref{alg:remaining-cofacets}}
\begin{tabular}{l|l|l|l|l}
$\mathcal{P}$ & $\mathcal{N}(v)(\mathcal{P}(v))$ & $w$ & $\mathcal{V}$ & $\mathcal{O}$ \\
\hline
$[0\to 5, 1\to 5, 3\to 5]$ & $\begin{pmatrix}1.85\\3\end{pmatrix}\begin{pmatrix}1.85\\4\end{pmatrix}\begin{pmatrix}1.85\\6\end{pmatrix}$ & 3 & $[0\to 2, 3\to 3, 4\to 1, 5\to 1,6\to 1, 7\to 2]$ & $[2]$ \\
$[0\to 6, 1\to 5, 3\to 5]$ & $\begin{pmatrix}1.85\\5\end{pmatrix}\begin{pmatrix}1.85\\4\end{pmatrix}\begin{pmatrix}1.85\\6\end{pmatrix}$ & 4 & $[0\to 2, 4\to 2, 5\to 1,6\to 1, 7\to 2]$ & $[2]$ \\
$[0\to 6, 1\to 6, 3\to 5]$ & $\begin{pmatrix}1.85\\5\end{pmatrix}\begin{pmatrix}1.85\\6\end{pmatrix}\begin{pmatrix}1.85\\6\end{pmatrix}$ & 5 & $[0\to 2, 4\to 2, 5\to 2,6\to 1, 7\to 2]$ & $[2]$ \\
$[0\to 7, 1\to 6, 3\to 5]$ & $\begin{pmatrix}2.00\\4\end{pmatrix}\begin{pmatrix}1.85\\6\end{pmatrix}\begin{pmatrix}1.85\\6\end{pmatrix}$ & 6 & $[0\to 2, 4\to 2, 5\to 2,6\to 2, 7\to 2]$ & $[2]$ \\
$[0\to 7, 1\to 7, 3\to 5]$ & $\begin{pmatrix}2.00\\4\end{pmatrix}\begin{pmatrix}2.00\\5\end{pmatrix}\begin{pmatrix}1.85\\6\end{pmatrix}$ & 6 & $[0\to 2, 4\to 2, 5\to 2,6\to 3, 7\to 2]$ & $[2,6]$ \\
$[0\to 7, 1\to 7, 3\to 6]$ & $\begin{pmatrix}2.00\\4\end{pmatrix}\begin{pmatrix}2.00\\5\end{pmatrix}\begin{pmatrix}1.85\\0\end{pmatrix}$ & 0 & $[0\to 3, 4\to 2, 5\to 2, 7\to 2]$ & $[2,6]$ \\
$[0\to 7, 1\to 7, 3\to 7]$ & $\begin{pmatrix}2.00\\4\end{pmatrix}\begin{pmatrix}2.00\\5\end{pmatrix}\begin{pmatrix}2.00\\7\end{pmatrix}$ & 4 & $[4\to 3, 5\to 2, 7\to 2]$ & $[2,6,4]$ \\
$[0\to 8, 1\to 7, 3\to 7]$ & $\phantom{\begin{pmatrix}2.00\\4\end{pmatrix}}\begin{pmatrix}2.00\\5\end{pmatrix}\begin{pmatrix}2.00\\7\end{pmatrix}$ & 5 & $[5\to 3, 7\to 2]$ & $[2,6,4,5]$ \\
$[0\to 8, 1\to 8, 3\to 7]$ & $\phantom{\begin{pmatrix}2.00\\4\end{pmatrix}}\phantom{\begin{pmatrix}2.00\\5\end{pmatrix}}\begin{pmatrix}2.00\\7\end{pmatrix}$ & 7 & $[7\to 3]$ & $[2,6,4,5,7]$ \\
$[0\to 8, 1\to 8, 3\to 8]$ &  &  & $[]$ & $[2,6,4,5,7]$ \\
\end{tabular}

Each row in this last table shows: the state of $\mathcal{P}$ at the beginning, the three neighbors of the vertices pointed to by the indices in $\mathcal{P}$, the resulting choice for $w$, the table $\mathcal{V}$ after updating and finally the output queue after catching and outputting any vertex not in the simplex $[0 1 3]$ that accumulates a total count of 3.

The last row described in the table documents the state of all the data structures involved after the algorithm returns.

\end{document}